\newcommand{\remove}[1]{}
\newtheorem{theorem}{Theorem}
\newtheorem{lemma}{Lemma}
\newtheorem{remark}{Remark}
\newtheorem{assumption}{Assumption}
\begin{document}

 \title{Optimal Sensing and Data Estimation in a Large Sensor Network
 \thanks{Arpan Chattopadhyay and  Urbashi Mitra are with Ming Hsieh Department of Electrical Engineering, University of 
 Southern California, Los Angeles, USA. Email: \{achattop,ubli\}@usc.edu}\\
 \thanks{This work was funded by the following grants: ONR N00014-15-1-2550, NSF CNS-1213128, NSF CCF-1410009, AFOSR
FA9550-12-1-0215, and NSF CPS-1446901, and also by the UK Leverhulme Trust, the UK Royal Society of Engineers and the Fulbright Foundation.}
}

\author{
Arpan~Chattopadhyay, Urbashi~Mitra \\
}

\maketitle
\thispagestyle{empty}

\begin{abstract}
An energy efficient use of large scale sensor networks necessitates activating a subset of possible sensors for estimation at a fusion center. The problem is inherently combinatorial; to this end, a set of  iterative, randomized  algorithms are developed for sensor subset selection by exploiting the underlying statistics. Gibbs sampling-based methods are designed to optimize the estimation error and the mean number of activated sensors.  {\em The optimality of the proposed strategy is proven}, along with guarantees on their convergence speeds. Also, another new algorithm  exploiting stochastic approximation in conjunction with Gibbs sampling is derived for a constrained version of the sensor selection problem.  The methodology is extended to the scenario where the fusion center has access to only a parametric form of the joint statistics, but not the true underlying distribution.  Therein, expectation-maximization is effectively employed to  learn the distribution.  Strategies for iid time-varying data are also outlined.  Numerical results show that the proposed methods converge very fast to the respective {\em optimal} solutions, and therefore can be employed for optimal sensor subset selection  in practical sensor networks.
\end{abstract}

\begin{keywords}
Wireless sensor networks, active sensing, data estimation, Gibbs sampling, stochastic approximation, expectation maximization.
\end{keywords}

\section{Introduction}\label{section:introduction}
A wireless sensor network typically consists of a number of sensor  nodes deployed to monitor some physical process.  The sensor data is often delivered to a fusion center via wireless links. The fusion center,  based on the gathered data from the sensors, infers the state of the physical process and makes control decisions if necessary. 

Sensor networks have  widespread applications in various domains such as environmental monitoring, industrial process monitoring and control, localization, tracking of mobile objects, system parameter estimation, and even in disaster management. However, severe resource constraints in such networks necessitates  careful design and control strategies in order to attain a reasonable compromise between resource usage and network performance. One major restriction is that the nodes are battery constrained, which  limits the lifetime of the network. Also, low capacity of wireless channels due to transmit power constraint, heavy interference and unreliable link behaviour restricts the amount of data that can be sent to the fusion center per unit time. In some special cases, such as mobile crowdsensing applications (see \cite{schnitzler-etal15sensor-selection-crowdsensing}), a certain cost might be necessary in order to engage a sensor owned by a third party. All these constraints lead us to the fundamental question: how to select a small subset of sensors so that the observations made by these sensors are most informative for the effiicient inference of the state of the physical process under measurement?

Recent results have focused on optimal sequential sensor subset selection in order to monitor a random process modeled as Markov chain or linear dynamical system; see e.g. \cite{daphney-etal14active-classification-pomdp, daphney-etal13energy-efficient-sensor-selection, krishnamurthy07structured-threshold-policies, wu-arapostathis08optimal-sensor-querying, gupta-etal06stochastic-sensor-selection-algorithm, bertrand-moonen10sensor-selection-linear-mmse}. Sensor subset selection using these control-theoretic resullts are typically computationally expensive, and the low-complexity approximation schemes proposed in some of these papers (such as \cite{daphney-etal13energy-efficient-sensor-selection} and \cite{bertrand-moonen10sensor-selection-linear-mmse}) are not optimal. On the other hand, there appears to be limited work on optimal subset selection when sensor data is static and its distribution is known either absolutely or in parametric form; the major challenge in this problem is computational (\cite{wang-etal16efficient-observation-selection}), where the computational burden arises for  two reasons: (i) finding the optimal subset requires a search operation over all possible subsets of sensors, thereby requiring exponentially many number of computations, and (ii) for each subset of active sensors, computing the estimation error conditioned on the observation made by active sensors requires exponentially many computations. In \cite{wang-etal16efficient-observation-selection},  the problem of minimizing the minimum mean squared error (MMSE) of a vector signal using samples collected from a given number of sensors chosen  by the network operator is considered; a tractable lower bound to the MMSE is employed in a certain greedy algorithm to obviate the complexity in MMSE computation and  the combinatorial problem of searching over all possible subsets of sensors. In contrast, our paper deals with a general error metric (which could potentially be the MMSE or even the lower bound to MMSE as in  \cite{wang-etal16efficient-observation-selection}), and proposes Gibbs sampling based techniques for the optimal subset selection problem, in order to minimize a linear combination of the estimation error and the expected number of activated sensors.  To the best of our knowledge, ours is the first paper to use Gibbs sampling for {\em optimal} sensor subset selection with low complexity in the context of active sensing.  We also provide an algorithm based on Gibbs sampling and stochastic approximation, which is provably optimal  and which minimizes  the expected estimation error subject to a constraint on the expected number of activated sensors; this    technique  can be employed to solve many other constrained combinatorial optimization problems.\footnote{In this connection, we would like to mention that Gibbs sampling based algorithms were used in  wireless caching  \cite{chattopadhyay-etal16gibbsian-caching-arxiv}, but to solve an unconstrained problem. In the current paper, we  combine Gibbs sampling and stochastic approximation to solve a constrained optimization problem; this technique is   general and can iteratively solve many other constrained combinatorial optimization problems {\em optimally} with very small computation per iteration, while the approximation algorithms are not guaranteed to achieve optimality.}

\subsection{Organization and our contribution}\label{subsection:organization}
The  paper is organized as follows. 
 The system model is described in Section~\ref{section:system-model}.
 In Section~\ref{section:gibbs-sampling-unconstrained-problem}, we propose  Gibbs sampling based algorithms to minimize a linear combination of data estimation error and the number of active sensors. We prove convergence of these algorithms, and also provide a bound on  the convergence speed of one algorithm. 
 Section~\ref{section:gibbs-stochastic-approximation-unconstrained-problem} provides algorithm for  minimizing the estimation  error subject to a constraint on the mean number of active sensors. We propose a novel algorithm based on Gibbs sampling and stochastic approximation, and prove its convergence to the desired solution. To the best of our knowledge, this is a novel technique that can  be used for other constrained combinatorial optimization problems as well. We also discuss how the Gibbs sampling algorithm can be used when we have a hard constraint on the number of activated sensors. 
 Section~\ref{section:EM-for-parametric-distribution} discusses expectation maximization (EM) based algorithms when data comes from a parameterized distribution with unknown parameters.
  Numerical results on computational gain and performance improvement by using some of the proposed algorithms are presented in Section~\ref{section:numerical-results}. 
  Finally, we conclude in Section~\ref{section:conclusion}. All proofs are provided in the appendices.

We have also discussed in various sections how the proposed algorithms with minor modifications can be used for data varying in time in an i.i.d. fashion.

\section{System Model and Notation}\label{section:system-model}
\subsection{The network and data model}\label{subsection:network-data-model}
We consider a large connected single or multi-hop sensor network, whose sensor nodes are denoted by the set   $\mathcal{N}=\{1,2,\cdots,N\}$. 
Each node~$k \in \mathcal{N}$ is associated with a (possibly vector-valued) data $\underline{X}_k$, and we denote by $\underline{X}=\{\underline{X}_k\}_{k \in \mathcal{N}}$ the set of data which has to be reconstructed.  
A {\em fusion center}  determines the set of activated sensors, and estimates the data in each node given the limited observations only from the activated sensors.

While our methods assume static data from the sensors; these methods can be employed with good performance for data that varies in an iid fashion with respect to time.

\subsection{Reconstruction of sensor data}\label{subsection:reconstruction-problem}
We  denote the {\em activation state} of a sensor by $1$ if it is active, and by $0$ otherwise. 
We call $\mathcal{B}:=\{0,1\}^N$ the set of all possible configurations in the network, and denote a generic  configuration by $B$. Specifying a configuration is equivalent to selecting a subset $\mathcal{S}$ of active sensors. 
We denote by $B_{-j} \in \{0,1\}^{N-1}$ the configuration $B$ with its $j$-th entry removed.

The estimate of $\underline{X}$   at the fusion center is denoted by $\hat{\underline{X}}$.  The corresponding expected error under configuration $B \in \mathcal{B}$ is denoted by $\mathbb{E} d_B(\underline{X}, \hat{\underline{X}})= \sum_{k=1}^N \mathbb{E} d_B(\underline{X}_k,\hat{\underline{X}_k})$. Specifically, the  mean squared error (MSE) yields   $\mathbb{E} d_B(\underline{X}, \hat{\underline{X}})= \mathbb{E}(||\underline{X}-\hat{\underline{X}}||^2)=\sum_{k=1}^N \mathbb{E}(||\underline{X}_k-\hat{\underline{X}_k}||^2)$.  Let us denote the cost of activating a sensor by $\lambda$. Heterogeneous sensor classes with different priorities or weights can be straightforwardly accommodated and thus are not presented herein.

\subsubsection{The unconstrained problem}\label{subsubsection:unconstrained-optimization-problem-static-data}
Given a configuration $B \in \mathcal{B}$, the associated network cost is given by:
\small
\begin{equation}\label{eqn:cost-definition}
h(B):=\mathbb{E} d_B(\underline{X},\hat{\underline{X}})+\lambda ||B||_1 
\end{equation}
\normalsize

\noindent In the context of stastistical physics, one can view $h(B)$ as the potential under configuration $B$.  Our goal herein is to solve the following optimization problem:
\begin{equation}
\min_{B \in \mathcal{B}} h(B) \tag{UP} \label{eqn:unconstrained-optimization-problem-static-data}
\end{equation}

\subsubsection{The constrained problem}\label{subsubsection:constrained-optimization-problem-static-data}
Problem~\eqref{eqn:unconstrained-optimization-problem-static-data} is a relaxed version of the constrained problem below:
\begin{equation}
\min_{B \in \mathcal{B}} \mathbb{E} d_B(\underline{X},\hat{\underline{X}}) \textbf{ s.t. } \mathbb{E} ||B||_1 \leq \bar{N} \tag{CP} \label{eqn:constrained-optimization-problem-static-data}
\end{equation}
Here the expectation in the constraint is over any possible randomization in choosing the configuration $B$. The cost of activating a sensor, $\lambda$, can be viewed as a Lagrange multiplier used to relax this constrained problem. 

Theorem~\ref{theorem:relation-between-constrained-and-unconstrained-problems} relates solution of \eqref{eqn:unconstrained-optimization-problem-static-data} to \eqref{eqn:constrained-optimization-problem-static-data}.
\begin{theorem}\label{theorem:relation-between-constrained-and-unconstrained-problems}
Consider problems \eqref{eqn:constrained-optimization-problem-static-data} and  \eqref{eqn:unconstrained-optimization-problem-static-data}. If there exists a Lagrange multiplier $\lambda^* \geq 0$ and a $B^* \in \mathcal{B}$, such that an optimal configuration for \eqref{eqn:unconstrained-optimization-problem-static-data} under  $\lambda=\lambda^*$ is $B^*$, and the constraint in \eqref{eqn:constrained-optimization-problem-static-data} is satisfied with equality under the pair 
$(B^*,\lambda^*)$, then $B^*$ is an optimal configuration for \eqref{eqn:constrained-optimization-problem-static-data}.

In case there exist multiple configurations $B_1^*,B_2^*, \cdots, B_m^*$, a multiplier  $\lambda^* \geq 0$, and a probability mass function $(p_1,p_2,\cdots,p_m)$  such that (i) each of $B_1^*,B_2^*,\cdots, B_m^*$ is optimal for problem~\eqref{eqn:unconstrained-optimization-problem-static-data} under $\lambda^*$, and (ii) $\sum_{i=1}^m p_i ||B_i^*||_1=\bar{N}$,  then an optimal solution for \eqref{eqn:constrained-optimization-problem-static-data} is to choose one configuration from $B_1^*,B_2^*,\cdots, B_m^*$ with probability mass function  $(p_1,p_2,\cdots,p_m)$.
\end{theorem}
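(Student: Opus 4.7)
The plan is to invoke the standard Lagrangian weak-duality argument, which is clean precisely because the Lagrange multiplier $\lambda^*$ is assumed to exist and the corresponding unconstrained minimizer (or mixture of minimizers) happens to meet the constraint with equality. No deep convex analysis is needed; the finiteness of $\mathcal{B}$ makes everything elementary.

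For the first part, I would fix any feasible (possibly randomized) configuration $B$ for \eqref{eqn:constrained-optimization-problem-static-data}, i.e.\ one with $\mathbb{E}\|B\|_1 \leq \bar{N}$. Since $B^*$ minimizes $h(\cdot)$ under $\lambda=\lambda^*$, I would write
\begin{equation*}
\mathbb{E} d_{B^*}(\underline{X},\hat{\underline{X}}) + \lambda^* \|B^*\|_1 \;\leq\; \mathbb{E} d_{B}(\underline{X},\hat{\underline{X}}) + \lambda^* \|B\|_1
\end{equation*}
for every realization of $B$, then take expectations over the randomization used to pick $B$. Rearranging and using the two facts $\|B^*\|_1 = \bar{N}$ and $\mathbb{E}\|B\|_1 \leq \bar{N}$ together with $\lambda^* \geq 0$ gives $\mathbb{E} d_{B^*}(\underline{X},\hat{\underline{X}}) \leq \mathbb{E} d_{B}(\underline{X},\hat{\underline{X}})$, which is exactly optimality of $B^*$ for \eqref{eqn:constrained-optimization-problem-static-data}.

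For the second part, I would apply the same inequality to each $B_i^*$, multiply by $p_i$, and sum. Writing $H^* := \min_{B \in \mathcal{B}} h(B)$ under $\lambda^*$, each $B_i^*$ satisfies $\mathbb{E} d_{B_i^*}(\underline{X},\hat{\underline{X}}) + \lambda^* \|B_i^*\|_1 = H^*$, so the randomized strategy achieves expected value $\sum_i p_i \mathbb{E} d_{B_i^*}(\underline{X},\hat{\underline{X}}) = H^* - \lambda^* \bar{N}$, using hypothesis (ii). For any feasible randomized $B'$, the Lagrangian inequality gives $\mathbb{E} d_{B'}(\underline{X},\hat{\underline{X}}) + \lambda^* \mathbb{E}\|B'\|_1 \geq H^*$, so
\begin{equation*}
\mathbb{E} d_{B'}(\underline{X},\hat{\underline{X}}) \;\geq\; H^* - \lambda^* \mathbb{E}\|B'\|_1 \;\geq\; H^* - \lambda^* \bar{N},
\end{equation*}
the second inequality by feasibility and $\lambda^*\geq 0$. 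Combining the two displays shows the randomized policy attains the minimum over all feasible (including randomized) policies.

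There is no real obstacle: the only subtle point is that one must allow randomized competitors $B$ in the constraint $\mathbb{E}\|B\|_1 \leq \bar{N}$, since the expectation in \eqref{eqn:constrained-optimization-problem-static-data} is explicitly over any randomization. Handling this correctly amounts to taking expectations of the pointwise Lagrangian inequality before exploiting the sign of $\lambda^*$; as long as I am careful about this, the argument is a one-line weak-duality computation in both cases.
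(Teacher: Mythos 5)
Your proof is correct and follows essentially the same weak-duality/exchange argument as the paper: the first part is the same two-line computation (apply the unconstrained optimality inequality $\mathbb{E} d_{B^*}+\lambda^*\|B^*\|_1 \leq \mathbb{E} d_{B}+\lambda^*\|B\|_1$, then use $\|B^*\|_1=\bar{N}$, $\mathbb{E}\|B\|_1\leq\bar{N}$, and $\lambda^*\geq 0$). The paper dismisses the second part with ``can be proved similarly'' and implicitly treats the competitor as deterministic, so your explicit handling of randomized competitors and the written-out mixture argument for the second part are welcome refinements of the same approach rather than a different route.
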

\begin{proof}
See Appendix~\ref{appendix:proof-of-relation-between-constrained-and-unconstrained-problems}.
\end{proof}
\begin{remark}
Theorem~\ref{theorem:relation-between-constrained-and-unconstrained-problems} allows us to obtain a solution for \eqref{eqn:constrained-optimization-problem-static-data} from the solution of $\eqref{eqn:unconstrained-optimization-problem-static-data}$ by choosing an appropriate $\lambda^*$; this will be elaborated upon in Section~\ref{section:gibbs-stochastic-approximation-unconstrained-problem}.
\end{remark}

\section{Gibbs sampling approach to solve the unconstrained problem}
\label{section:gibbs-sampling-unconstrained-problem}
In this section, we will provide algorithms based on Gibbs sampling to compute the optimal solution 
for \eqref{eqn:unconstrained-optimization-problem-static-data}.

\subsection{Basic Gibbs sampling}\label{subsection:finite-beta-gibbs-sampling-unconstrained-problem}
Let us denote the distribution $\pi_{\beta}(\cdot)$ over $\mathcal{B}$ as follows:
$$\pi_{\beta}(B):=\frac{e^{-\beta h(B)}}{\sum_{B \in \mathcal{B}}e^{-\beta h(B)}}:=\frac{e^{-\beta h(B)}}{Z_{\beta}}$$. 
Motivated by the theory of  statistical physics, we call the parameter $\beta$ the {\em inverse temperature}, and $Z_{\beta}$ the {\em partition function}. Clearly, $\lim_{\beta \uparrow \infty} \sum_{ B \in \arg \min_{A \in \mathcal{B}} h(A) } \pi_{\beta}(B)=1$. Hence, if we 
can choose a configuration $B$ with probability $\pi_{\beta}(B)$ for a large $\beta>0$, we can approximately solve \eqref{eqn:unconstrained-optimization-problem-static-data}.

Computing $Z_{\beta}$ will require $2^N$ addition operations, and hence it is computationally prohibitive for large $N$. As an alternative, we provide an iterative algorithm based on Gibbs sampling, which requires many fewer computations in each iteration. Gibbs sampling runs a discrete-time Markov chain $\{B(t)\}_{t \geq 0}$ whose stationary distribution is $\pi_{\beta}(\cdot)$. 

The BASICGIBBS~algorithm (Algorithm~\ref{algorithm:basic-gibbs-sampling}) simulates the Markov chain $\{B(t)\}_{t \geq 0}$ for any $\beta>0$. The fusion center runs the algorithm to determine the activation set; as such, the fusion center must create a virtual network graph.

\begin{algorithm}[t!]
\hrule
Choose any initial $B(0) \in \{0,1\}^N$. 
At each discrete time instant $t=0,1,2,\cdots$, pick a random sensor $j_t \in \mathcal{N}$ independently and uniformly. For sensor $j_t$, choose $B_{j_t}(t)=1$ with probability 
$p:=\frac{ e^{-\beta h(B_{-j_t}(t-1),1)}}{e^{-\beta h(B_{-j_t}(t-1),1)}+e^{-\beta h(B_{-j_t}(t-1),0)}}$ and choose $B_{j_t}(t)=0$ with probability 
$(1-p)$.  Choose $B_k(t)=B_k(t-1)$ for all $k \neq j_t$.
\hrule
\caption{BASICGIBBS algorithm}
\label{algorithm:basic-gibbs-sampling}
\vspace{2mm}
\end{algorithm}

\begin{theorem}\label{theorem:convergence-basic-Gibbs-sampling}
The Markov chain $\{B(t)\}_{t \geq 0}$ has a stationary distribution $\pi_{\beta}(\cdot)$ under the 
BASICGIBBS~algorithm.
\end{theorem}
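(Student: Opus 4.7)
The plan is to prove stationarity by establishing the detailed balance condition $\pi_\beta(B)\,P(B,B') = \pi_\beta(B')\,P(B',B)$ for the transition kernel $P$ of the chain $\{B(t)\}_{t\geq 0}$, and then summing over $B$ to recover $\sum_{B}\pi_\beta(B)P(B,B')=\pi_\beta(B')$. This is the cleanest route because the update rule in BASICGIBBS changes at most one coordinate per step, so only a handful of transition cases need to be checked.

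First I would write down $P(B,B')$ explicitly from the algorithm. If $B'=B$, detailed balance is trivial; if $B$ and $B'$ differ in two or more coordinates, then $P(B,B')=P(B',B)=0$, and balance again holds trivially. The only non-trivial case is when $B$ and $B'$ agree on all coordinates except some index $j$, with (say) $B_j=0$ and $B'_j=1$, so that $B_{-j}=B'_{-j}$. In this case the algorithm picks coordinate $j$ with probability $1/N$ and then sets it to $1$ with probability $p = e^{-\beta h(B_{-j},1)}/\bigl(e^{-\beta h(B_{-j},0)}+e^{-\beta h(B_{-j},1)}\bigr)$, and similarly for the reverse direction with $1-p$.

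The central computation is then:
\begin{equation*}
\pi_\beta(B)\,P(B,B')=\frac{e^{-\beta h(B)}}{Z_\beta}\cdot\frac{1}{N}\cdot\frac{e^{-\beta h(B')}}{e^{-\beta h(B)}+e^{-\beta h(B')}},
\end{equation*}
which is manifestly symmetric in $B$ and $B'$, so it equals $\pi_\beta(B')\,P(B',B)$. Here I am using that $h(B)=h(B_{-j},0)$ and $h(B')=h(B_{-j},1)$. Summing detailed balance over $B\in\mathcal{B}$ then yields the stationarity equation $\pi_\beta P = \pi_\beta$, proving the theorem. I would also briefly note, for completeness, that the chain is irreducible (any two configurations can be connected by a sequence of single-coordinate flips, each of which has positive probability) and aperiodic (since $P(B,B)>0$ for every $B$, as the chain can reselect the same value of the chosen coordinate), so that $\pi_\beta$ is in fact the unique stationary distribution and the chain converges to it from any initial state — a fact that will be needed for the subsequent convergence-speed analysis.

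There is no real obstacle in the proof; the only subtlety worth flagging is checking that the conditional distribution $p$ used in the algorithm is exactly the Gibbs conditional $\pi_\beta(B_j=1\mid B_{-j})$, since only then does the numerator--denominator cancellation in the detailed balance calculation go through. Once that identification is made, the rest is a one-line verification.
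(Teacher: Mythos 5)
Your proof is correct and is essentially the same argument the paper invokes: the paper simply cites the standard Gibbs-sampler theory from Br\'{e}maud (Chapter 7), and your detailed-balance computation, together with the observation that $p$ is exactly the Gibbs conditional $\pi_\beta(B_j=1\mid B_{-j})$, is precisely the standard proof of that cited result. The added remarks on irreducibility and aperiodicity are accurate and strengthen the statement to uniqueness of the stationary distribution.
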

\begin{proof}
Follows from the theory  in \cite[Chapter~$7$]{breamud99gibbs-sampling}).
\end{proof}
\begin{remark}
Theorem~\ref{theorem:convergence-basic-Gibbs-sampling} tells us that if the fusion center runs  BASICGIBBS~algorithm and reaches the steady state distribution of the Markov chain $\{B(t)\}_{t \geq 0}$, then the configuration chosen by the algorithm will have distribution $\pi_{\beta}(\cdot)$. For very large  $\beta>0$, if one runs $\{B(t)\}_{t \geq 0}$ for a sufficiently long, finite time $T$, then the terminal state $B_{T}$ will belong to 
$\arg \min_{B \in \mathcal{B}} h(B)$ with high probability. 
\end{remark}

\subsection{The exact solution}\label{subsection:growing-beta-gibbs-sampling-unconstrained-problem}
BASICGIBBS  is operated with a fixed $\beta$; but, in practice, the optimal soultion of the unconstrained problem~\eqref{eqn:unconstrained-optimization-problem-static-data} is obtained with 
$\beta \uparrow \infty$; this is done by updating  $\beta$  at a slower time-scale than the iterates of BASICGIBBS. This new algorithm is called MODIFIEDGIBBS (Algorithm~\ref{algorthm:gibbs-sampling-with-increasing-beta}).

\begin{theorem}\label{theorem:result-on-weak-and-strong-ergodicity}
Under MODIFIEDGIBBS~algorithm, the Markov chain $\{B(t)\}_{t \geq 0}$ is strongly ergodic, and the limiting probability distribution satisfies $\lim_{t \rightarrow \infty} \sum_{A \in \arg \min_{C \in \mathcal{B}} h(C) }\mathbb{P}(B(t)=A)=1$.
\end{theorem}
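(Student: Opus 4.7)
The plan is to follow the standard simulated annealing convergence argument (as developed by Hajek and presented in Bremaud's textbook, Chapter~7, which is already cited for Theorem~2). The result will follow once we check that MODIFIEDGIBBS satisfies the hypotheses of the general strong ergodicity theorem for time-inhomogeneous Markov chains whose transition kernels are Gibbs kernels with slowly-growing inverse temperature $\beta(t)$.

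First I would set up the framework: at each time $t$, the transition kernel $P_t$ of $\{B(t)\}$ is exactly the BASICGIBBS~kernel with parameter $\beta(t)$, which is reversible with respect to $\pi_{\beta(t)}$. This makes $\{B(t)\}$ a time-inhomogeneous Markov chain over the finite state space $\mathcal{B}=\{0,1\}^N$. The proof then splits into two classical pieces: (i) \emph{weak ergodicity}, meaning the chain forgets its initial condition, and (ii) \emph{strong ergodicity}, meaning the marginal law converges to a specific limit.

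For weak ergodicity, I would bound Dobrushin's ergodic coefficient $\delta(P_t P_{t+1}\cdots P_{t+N-1})$ over blocks of $N$ single-site updates; since every site is picked with probability $1/N$ at each step and the Gibbs conditional probabilities are bounded below by something like $(1+e^{\beta(t)\,\Delta_{\max}})^{-1}$ where $\Delta_{\max}=\max_{B,B'}|h(B)-h(B')|$, one obtains a block contraction of order $e^{-c\,\beta(t)\,\Delta_{\max}}$. Provided the schedule satisfies $\beta(t)\leq \frac{1}{c N \Delta_{\max}}\log t$ (which is what the MODIFIEDGIBBS~schedule must be chosen to obey; this is the hypothesis implicitly built into Algorithm~2), the series $\sum_t e^{-c N \beta(t)\,\Delta_{\max}}$ diverges, and standard results (Bremaud, Ch.~7) give weak ergodicity.

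For strong ergodicity, I would show in addition that $\sum_{t}\|\pi_{\beta(t+1)}-\pi_{\beta(t)}\|_{TV}<\infty$. This follows because $\pi_\beta(B)$ is a smooth function of $\beta$, with derivative $\frac{\partial \pi_\beta}{\partial \beta}$ bounded in norm by a constant times the variance of $h$ under $\pi_\beta$, and because $\beta(t+1)-\beta(t)=O(1/(t\log t))$ or similar under the logarithmic schedule. Combining weak ergodicity with this summability (via Theorem~7.3 of Bremaud, or Hajek's theorem) yields strong ergodicity, and the limit law is $\lim_{\beta\to\infty}\pi_\beta$. Finally, since $\pi_\beta(B)=e^{-\beta h(B)}/Z_\beta$ concentrates on $\arg\min_{C\in\mathcal{B}} h(C)$ as $\beta\uparrow\infty$ (a direct Laplace-type argument on the finite set $\mathcal{B}$), we obtain $\lim_{t\to\infty}\sum_{A\in\arg\min h}\mathbb{P}(B(t)=A)=1$.

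The main obstacle I expect is not the structural argument, which is standard, but verifying the precise constant in the schedule: one needs the ``depth'' constant $\Delta$ governing the logarithmic rate $\beta(t)\le \frac{\log t}{\Delta}$ to be at least as large as the maximum barrier height between any non-global local minimum and the global minimum of $h$. Since $h$ is defined via an expected distortion plus a linear penalty, this maximum barrier is bounded by $\Delta_{\max}=\max_B h(B)-\min_B h(B)$, a finite quantity that only depends on the problem data and $\lambda$; choosing $\Delta$ equal to this crude upper bound (as MODIFIEDGIBBS must be prescribed to do) guarantees both the divergence needed for weak ergodicity and the summability needed for strong ergodicity, completing the proof.
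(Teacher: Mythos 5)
Your overall architecture matches the paper's proof: weak ergodicity via a lower bound on the Dobrushin ergodic coefficient of $N$-step blocks (using the schedule $\beta(t)=\beta(0)\log(1+t)$ with $\beta(0)N\Delta<1$ to make $\sum_l(1-\delta(P_l))$ diverge), then an upgrade to strong ergodicity via summability of the variation of the instantaneous stationary distributions, and finally concentration of $\pi_\beta$ on $\arg\min h$ as $\beta\uparrow\infty$. The weak ergodicity half is essentially identical to the paper's argument and is fine.

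The gap is in your justification of $\sum_t \lVert\pi_{\beta(t+1)}-\pi_{\beta(t)}\rVert_{TV}<\infty$. Under the logarithmic schedule the increments are $\beta(t+1)-\beta(t)=\beta(0)\log\bigl(1+\tfrac{1}{t+1}\bigr)=\Theta(1/t)$, not $O(1/(t\log t))$ as you assert, and these increments telescope to $\lim_t\beta(t)-\beta(0)=\infty$. Consequently, a bound of the form $\lVert\partial_\beta\pi_\beta\rVert\le C$ (a constant) yields only $\sum_t\lVert\pi_{\beta(t+1)}-\pi_{\beta(t)}\rVert_{TV}\le C\sum_t(\beta(t+1)-\beta(t))=\infty$, which proves nothing. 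To close the argument you must use that the derivative actually decays: either (a) quantitatively, since $\partial_\beta\pi_\beta(B)=\pi_\beta(B)\bigl(\mathbb{E}_{\pi_\beta}h-h(B)\bigr)$ and hence $\sum_B|\partial_\beta\pi_\beta(B)|\le\sqrt{\mathrm{Var}_{\pi_\beta}(h)}$, which vanishes exponentially in $\beta$ once $h$ is non-constant, making the integral $\int^\infty\lVert\partial_\beta\pi_\beta\rVert\,d\beta$ finite; or (b) qualitatively, as the paper does, by observing that for every fixed $B$ the map $\beta\mapsto\pi_\beta(B)$ is eventually monotone (increasing for global minimizers, decreasing otherwise), so each series $\sum_t|\pi_{\beta(t+1)}(B)-\pi_{\beta(t)}(B)|$ telescopes to a finite limit. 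You gesture at (a) by mentioning the variance, but as written the step rests on "bounded derivative times small increments," which is insufficient. The remainder of your sketch (Laplace-type concentration of $\pi_\beta$ on the minimizers, and the remark that the crude oscillation bound $\Delta_{\max}$ dominates Hajek's depth constant) is correct and consistent with the paper.
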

\begin{proof}
See  Appendix~\ref{appendix:proof-of-weak-and-strong-ergodicity}. 
We have used the notion of weak and strong ergodicity of time-inhomogeneous Markov chains from 
\cite[Chapter~$6$, Section~$8$]{breamud99gibbs-sampling}), which is  provided in Appendix~\ref{appendix:weak-and-strong-ergodicity}. The proof is similar to the proof of one theorem in  \cite{chattopadhyay-etal16gibbsian-caching-arxiv}, but is given here for completeness.
\end{proof}
\begin{remark}
Theorem~\ref{theorem:result-on-weak-and-strong-ergodicity} shows that we can solve \eqref{eqn:unconstrained-optimization-problem-static-data} {\em exactly} if we run MODIFIEDGIBBS~algorithm for infinite time, in contrast with BASICGIBBS~algorithm which   provides an approximate solution.
\end{remark}
\begin{remark}
For  i.i.d. time varying $\{ \underline{X}(t) \}_{t \geq 0}$ with known joint distribution, we can either: (i) find the optimal configuration $B^*$ using MODIFIEDGIBBS  and use $B^*$ for ever, or (ii) run MODIFIEDGIBBS at the same timescale as $t$, and use the running configuration $B(t)$ for sensor activation; both schemes will minimize the time-average expected cost.
\end{remark}

\begin{algorithm}[t!]
\hrule
This algorithm is same as BASICGIBBS~algorithm except that at   time $t$, we use  
$\beta(t):=\beta(0) \log (1+t)$   to compute the update probabilities, where $\beta(0)>0$, $\beta(0) N \Delta<1$, and   $\Delta:=\max_{B \in \mathcal{B}, A \in \mathcal{B}}|h(B)-h(A)|$.
\hrule
\caption{MODIFIEDGIBBS algorithm}
\label{algorthm:gibbs-sampling-with-increasing-beta}
\vspace{2mm}
\end{algorithm}

\subsection{Convergence rate of BASICGIBBS}
Let $\mu_t$ denote the probability distribution of $B(t)$ under BASICGIBBS.  
Let us consider the transition probability matrix $P$ of the Markov chain $\{X(l)\}_{l \geq 0}$ with  $X(l)=B(lN)$, under the BASICGIBBS~algorithm. Let us recall the definition of the Dobrushin's ergodic coefficient $\delta(P)$ from \cite[Chapter~$6$, Section~$7$]{breamud99gibbs-sampling} for the matrix $P$; using a method similar to that of the proof of Theorem~\ref{theorem:result-on-weak-and-strong-ergodicity}, we can show that $\delta(P) \leq ( 1-\frac{ e^{-\beta N \Delta} }{N^N})$. 
  Then, by \cite[Chapter~$6$, Theorem~$7.2$]{breamud99gibbs-sampling}, we can say that under  BASICGIBBS~algorithm, we have $d_V(\mu_{lN},\pi_{\beta}) \leq d_V(\mu_{0},\pi_{\beta}) \bigg( 1-\frac{ e^{-\beta N \Delta} }{N^N}  \bigg)^l$. We can prove similar bounds for any $t=lN+k$, where $0 \leq k \leq N-1$.
  
  Unfortunately, we are not aware of such a bound for MODIFIEDGIBBS.

\begin{remark}
Clearly, under BASICGIBBS~algorithm, the convergence rate decreases as $\beta$ increases. Hence, there is a trade-off between convergence rate and accuracy of the solution in this case. Also, the rate of convergence decreases with $N$. For MODIFIEDGIBBS~algorithm, the convergence rate is expected to decrease with time. 
\end{remark}

\section{Gibbs sampling and stochastic approximation based approach to solve the constrained problem}
\label{section:gibbs-stochastic-approximation-unconstrained-problem}
In Section~\ref{section:gibbs-sampling-unconstrained-problem}, we presented Gibbs sampling based algorithms for  \eqref{eqn:unconstrained-optimization-problem-static-data}. In this section, we provide an algorithm that updates $\lambda$ with time in order to meet the constraint in 
\eqref{eqn:constrained-optimization-problem-static-data} with equality, and thereby solves \eqref{eqn:constrained-optimization-problem-static-data}  (via Theorem~\ref{theorem:relation-between-constrained-and-unconstrained-problems}). 

\begin{lemma}\label{lemma:active-sensors-decreasing-in-lambda}
The optimal mean number of active sensors, $\mathbb{E}|B|_1$, for the unconstrained problem~\eqref{eqn:unconstrained-optimization-problem-static-data}, decreases with $\lambda$. Similarly, the optimal 
error, $\mathbb{E}d_B(\underline{X},\hat{\underline{X}})$, increases with $\lambda$.
\end{lemma}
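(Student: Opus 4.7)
The plan is to use the standard interchange argument from parametric/Lagrangian optimization. Let $B^*(\lambda)$ denote any minimizer of $h(B) = \mathbb{E} d_B(\underline{X}, \hat{\underline{X}}) + \lambda \|B\|_1$ over $\mathcal{B}$, and fix any two values $\lambda_1 < \lambda_2$. The proof will consist of writing down the two optimality inequalities and combining them.

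First I would write the optimality of $B^*(\lambda_1)$ at $\lambda_1$ against the competitor $B^*(\lambda_2)$, and the optimality of $B^*(\lambda_2)$ at $\lambda_2$ against the competitor $B^*(\lambda_1)$:
\begin{align*}
\mathbb{E} d_{B^*(\lambda_1)} + \lambda_1 \|B^*(\lambda_1)\|_1 &\leq \mathbb{E} d_{B^*(\lambda_2)} + \lambda_1 \|B^*(\lambda_2)\|_1, \\
\mathbb{E} d_{B^*(\lambda_2)} + \lambda_2 \|B^*(\lambda_2)\|_1 &\leq \mathbb{E} d_{B^*(\lambda_1)} + \lambda_2 \|B^*(\lambda_1)\|_1.
\end{align*}
Adding the two inequalities cancels the error terms and yields $(\lambda_2 - \lambda_1)\bigl(\|B^*(\lambda_1)\|_1 - \|B^*(\lambda_2)\|_1\bigr) \geq 0$, so that $\|B^*(\lambda_1)\|_1 \geq \|B^*(\lambda_2)\|_1$ since $\lambda_2 > \lambda_1$. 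This gives monotonicity of the optimal number of active sensors.

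Next I would substitute this inequality back into the first display. Since $\|B^*(\lambda_2)\|_1 - \|B^*(\lambda_1)\|_1 \leq 0$ and $\lambda_1 \geq 0$, the first inequality gives
\[
\mathbb{E} d_{B^*(\lambda_1)}(\underline{X},\hat{\underline{X}}) - \mathbb{E} d_{B^*(\lambda_2)}(\underline{X},\hat{\underline{X}}) \leq \lambda_1 \bigl(\|B^*(\lambda_2)\|_1 - \|B^*(\lambda_1)\|_1\bigr) \leq 0,
\]
which is exactly the claim that the optimal estimation error is nondecreasing in $\lambda$.

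I do not expect any serious obstacle here; this is a textbook interchange argument and uses only the definition of optimality. The one small subtlety is that Theorem~\ref{theorem:relation-between-constrained-and-unconstrained-problems} allows randomization over multiple minimizers when several configurations are tied. In that case I would simply apply the same interchange argument to the expected $\ell_1$-norm and expected error under any such randomized optimal strategy --- the convex combination preserves both inequalities above, so the monotonicity statements continue to hold with $\mathbb{E}\|B\|_1$ and the expected error in place of the pure-configuration quantities. Hence "decreases" and "increases" should be read in the weak (nonstrict) sense throughout.
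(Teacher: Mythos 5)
Your proposal is correct and is essentially the same interchange argument the paper uses in Appendix~D: write the two optimality inequalities for $B^*(\lambda_1)$ and $B^*(\lambda_2)$, add them to get monotonicity of $\|B\|_1$, and then deduce monotonicity of the error (your substitution step makes explicit what the paper dismisses as ``similar arguments''). The remark about ties and weak monotonicity is a sensible clarification but does not change the argument.
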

\begin{proof}
See Appendix~\ref{appendix:proof-of-active-sensors-decreasing-in-lambda}.
\end{proof}
Lemma~\ref{lemma:active-sensors-decreasing-in-lambda} provides an intuition about how to update $\lambda$ in  BASICGIBBS or in MODIFIEDGIBBS in order  to solve \eqref{eqn:constrained-optimization-problem-static-data}.   We  seek to provide one  algorithm which updates $\lambda(t)$ in each iteration, based on the number of active sensors in the previous iteration. In order to maintain the necessary timescale difference between the $\{B(t)\}_{t \geq 0}$ process and the $\lambda(t)$ update process, we use stochastic approximation (\cite{borkar08stochastic-approximation-book}) based update rules for $\lambda(t)$.

\begin{remark}
The optimal mean number of active sensors, $\mathbb{E}||B||_1$, for the unconstrained problem~\eqref{eqn:unconstrained-optimization-problem-static-data} is a decreasing staircase function of $\lambda$, where each point of discontinuity is associated with a change in the optimizer $B^*(\lambda)$.
\end{remark}

The above remark tells us that the optimal solution of the constrained problem~\eqref{eqn:constrained-optimization-problem-static-data} requires us to randomize between two values of $\lambda$ in case the optimal $\lambda^*$ as in 
Theorem~\ref{theorem:relation-between-constrained-and-unconstrained-problems} belongs to the set of such discontinuities. However, this randomization will require us to update a randomization probability at another timescale; having stochastic approximations running in multiple timescales leads to very slow convergence and hence is not a very practical solution for \eqref{eqn:constrained-optimization-problem-static-data}. Hence, instead of using a varying $\beta(t)$, we use a fixed, but large $\beta$ and update $\lambda(t)$ in an iterative fashion using stochastic approximation.

Before proposing the algorithm,  we provide a result analogous to that in Lemma~\ref{lemma:active-sensors-decreasing-in-lambda}.

\begin{algorithm}[t!]
 \hrule
 Choose any initial $B(0) \in \{0,1\}^N$ and $\lambda(0) \geq 0$. 
At each discrete time instant $t=0,1,2,\cdots$, pick a random sensor $j_t \in \mathcal{N}$ independently and uniformly. For sensor $j_t$, choose $B_{j_t}(t)=1$ with probability 
$p:=\frac{ e^{-\beta h_{\lambda(t)}(B_{-j_t}(t-1),1)}}{e^{-\beta h_{\lambda(t)}(B_{-j_t}(t-1),1)}+e^{-\beta h_{\lambda(t)}(B_{-j_t}(t-1),0)}}$ and choose $B_{j_t}(t)=0$ with probability 
$(1-p)$. For $k  \neq j_t$, we choose $B_k(t)=B_k(t-1)$. 

After this operation, before the $(t+1)$ decision instant, update  $\lambda(t)$ at each node as follows. 

$$\lambda(t+1)=[\lambda(t)+a(t) (||B(t-1)||_1-\bar{N})]_b^c$$
The stepsize $\{a(t)\}_{t \geq 1}$ constitutes a positive sequence such that $\sum_{t=1}^{\infty}a(t)=\infty$  and $\sum_{t=1}^{\infty}a^2(t)=\infty$. The nonnegative projection boundaries $b$ and $c$ for the $\lambda(t)$ iterates are such that  $\lambda^* \in (b,c)$ where $\lambda^*$ is defined in Assumption~\ref{assumption:existence-of-optimal-lambda}.
\hrule
\caption{GIBBSLEARNING algorithm}
\label{algorithm:gibbs-learning-algorithm-for-constrained-problem}
\vspace{2mm}
 \end{algorithm}

\begin{lemma}\label{lemma:active-sensors-decreasing-in-lambda-under-basic-gibbs-sampling}
Under BASICGIBBS~algorithm for any given $\beta>0$, the mean number of active sensors 
$\mathbb{E}||B||_1$ is a continuous and decreasing function of $\lambda$.
\end{lemma}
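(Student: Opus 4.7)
The plan is to express the expected number of active sensors as an expectation under the stationary distribution $\pi_\beta$ (which exists by Theorem~\ref{theorem:convergence-basic-Gibbs-sampling}), and then analyze how this expectation depends on $\lambda$ by exploiting the exponential-family form of $\pi_\beta$. Write $e_B := \mathbb{E}d_B(\underline{X},\hat{\underline{X}})$, $n_B := \|B\|_1$, and
\[
\pi_\beta(B;\lambda) \;=\; \frac{e^{-\beta e_B - \beta \lambda n_B}}{Z_\beta(\lambda)}, \qquad Z_\beta(\lambda) \;=\; \sum_{A \in \mathcal{B}} e^{-\beta e_A - \beta \lambda n_A},
\]
so that $\bar{n}(\lambda) := \mathbb{E}_{\pi_\beta}[\|B\|_1] = \sum_B n_B \, \pi_\beta(B;\lambda)$.

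For continuity, I would observe that $Z_\beta(\lambda)$ is a finite sum of exponentials in $\lambda$, hence strictly positive and smooth in $\lambda$, so each $\pi_\beta(B;\lambda)$ is continuous (in fact real-analytic) in $\lambda$. Since $\mathcal{B}$ is finite, the finite sum $\bar{n}(\lambda)$ is continuous in $\lambda$.

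For monotonicity, I would compute the derivative. A direct calculation gives $\frac{\partial}{\partial \lambda} \log \pi_\beta(B;\lambda) = -\beta n_B + \beta \bar{n}(\lambda)$, hence
\[
\frac{\partial \pi_\beta(B;\lambda)}{\partial \lambda} \;=\; -\beta\,\pi_\beta(B;\lambda)\bigl(n_B - \bar{n}(\lambda)\bigr).
\]
Multiplying by $n_B$ and summing over $B \in \mathcal{B}$ (interchange of derivative and finite sum is trivial), I obtain the standard exponential-family identity
\[
\frac{d\bar{n}(\lambda)}{d\lambda} \;=\; -\beta \sum_{B}\pi_\beta(B;\lambda)\,n_B\bigl(n_B-\bar{n}(\lambda)\bigr) \;=\; -\beta\,\mathrm{Var}_{\pi_\beta}(\|B\|_1) \;\leq\; 0,
\]
so $\bar{n}(\lambda)$ is nonincreasing in $\lambda$. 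Combined with the continuity established above, this proves the lemma.

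There is really no hard step here; the main thing to get right is the bookkeeping for the derivative and noting that the variance of $\|B\|_1$ under $\pi_\beta$ is nonnegative. A minor subtlety worth remarking on is that the derivative vanishes only if all configurations $B$ with $\pi_\beta(B;\lambda)>0$ share the same Hamming weight, which generically fails, so $\bar{n}(\lambda)$ is typically strictly decreasing; but since the lemma only asserts (weak) monotonicity and continuity, the computation above suffices.
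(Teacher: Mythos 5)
Your proof is correct and follows essentially the same route as the paper's: both differentiate $\mathbb{E}_{\pi_\beta}\|B\|_1$ with respect to $\lambda$ using $\tfrac{dZ_\beta}{d\lambda}=-\beta Z_\beta\,\mathbb{E}\|B\|_1$ and reduce the sign of the derivative to the nonnegativity of $\mathrm{Var}_{\pi_\beta}(\|B\|_1)$, i.e.\ to $\mathbb{E}\|B\|_1^2\ge(\mathbb{E}\|B\|_1)^2$. Your log-derivative bookkeeping is a slightly cleaner packaging of the paper's quotient-rule computation, and your remark on when the derivative vanishes is a harmless extra.
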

\begin{proof}
See Appendix~\ref{appendix:proof-of-active-sensors-decreasing-in-lambda-under-basic-gibbs-sampling}.
\end{proof}
Let us fix any $\beta>0$. We make the following feasibility assumption for \eqref{eqn:constrained-optimization-problem-static-data}, under the chosen $\beta>0$. 
\begin{assumption}\label{assumption:existence-of-optimal-lambda}
There exists $\lambda^* \geq 0$ such that the constraint in \eqref{eqn:constrained-optimization-problem-static-data} under 
$\lambda^*$ and BASICGIBBS is met with equality.
\end{assumption}
\begin{remark}
By Lemma~\ref{lemma:active-sensors-decreasing-in-lambda-under-basic-gibbs-sampling}, $\mathbb{E}||B||_1$ continuously decreases in $\lambda$. Hence, if $\bar{N}$ is feasible, then such a $\lambda^*$ must exist by the {\em intermediate value theorem}.
\end{remark}
Let us define:
$h_{\lambda(t)}(B):=\mathbb{E}d_B(\underline{X},\hat{\underline{X}})+\lambda(t) ||B||_1$.

Our proposed algorithm GIBBSLEARNING (Algorithm~\ref{algorithm:gibbs-learning-algorithm-for-constrained-problem})  updates $\lambda(t)$ iteratively in order to  solve \eqref{eqn:constrained-optimization-problem-static-data}.

{\em Discussion of GIBBSLEARNING~algorithm:}
\begin{itemize}
\item If $||B(t-1)||_1$ is more than $\bar{N}$, then $\lambda(t)$ is increased with the hope that this will reduce the number of active sensors in subsequent iterations, as suggested by Lemma~\ref{lemma:active-sensors-decreasing-in-lambda-under-basic-gibbs-sampling}.
\item The $B(t)$ and $\lambda(t)$ processes run on two different timescales; $B(t)$ runs in the faster timescale whereas $\lambda(t)$ runs in a slower timescale. This can be understood from the fact that the stepsize in the $\lambda(t)$ update process decreases with time $t$. Here the faster timescale iterate will view the slower timescale iterate as quasi-static, while the slower timescale iterate will view the faster timescale as almost equilibriated. This is reminiscent of  two-timescale stochastic approximation (see \cite[Chapter~$6$]{borkar08stochastic-approximation-book}).
\end{itemize}
    
Let $\pi_{\beta| \lambda^*}(\cdot)$ denote  $\pi_{\beta}(\cdot)$ under $\lambda=\lambda^*$.

\begin{theorem}\label{theorem:optimality-of-the-learning-algorithm-for-constrained-problem}
Under GIBBSLEARNING~algorithm and Assumption~\ref{assumption:existence-of-optimal-lambda}, we have $\lambda(t) \rightarrow \lambda^*$ almost surely, and the limiting distribution of $\{B(t)\}_{t \geq 0}$ is $\pi_{\beta| \lambda^*}(\cdot)$.
\end{theorem}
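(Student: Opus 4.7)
The plan is to view the GIBBSLEARNING recursion as a two-timescale stochastic approximation with a controlled Markov chain in the fast timescale, following the framework of \cite[Chapter~$6$]{borkar08stochastic-approximation-book}. I read the stepsize condition as $\sum_t a(t)=\infty$, $\sum_t a^2(t)<\infty$ (standard Robbins--Monro), which guarantees that $\lambda(t)$ evolves on a slower timescale than the single--site Gibbs update of $B(t)$. Hence from the viewpoint of the fast chain, $\lambda(t)$ looks quasi-static, while from the viewpoint of the $\lambda$-recursion, the chain $B(t)$ appears equilibrated.

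First, I would handle the fast timescale. For any frozen $\lambda\in[b,c]$, the BASICGIBBS kernel is irreducible and aperiodic on the finite set $\mathcal{B}$ with unique invariant distribution $\pi_{\beta|\lambda}(\cdot)$ by Theorem~\ref{theorem:convergence-basic-Gibbs-sampling}. The transition probabilities depend continuously on $\lambda$, so $\pi_{\beta|\lambda}$ and the conditional mean $g(\lambda):=\mathbb{E}_{\pi_{\beta|\lambda}}[\|B\|_1]$ are continuous in $\lambda$ on the compact interval $[b,c]$. Combined with finiteness of $\mathcal{B}$, this yields the uniform (in $\lambda$) geometric ergodicity needed to separate the two timescales and to justify the averaging argument for the Markov noise.

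Next, the slow timescale. By standard averaging (e.g.\ \cite[Chapter~$6$, Theorem~$2$]{borkar08stochastic-approximation-book}), the piecewise--constant interpolated trajectory of $\lambda(t)$ almost surely tracks solutions of the projected ODE
\begin{equation*}
\dot{\lambda}(s) \;=\; g(\lambda(s)) - \bar{N}, \qquad \lambda(s)\in[b,c],
\end{equation*}
with reflection at the boundary $\{b,c\}$. By Lemma~\ref{lemma:active-sensors-decreasing-in-lambda-under-basic-gibbs-sampling}, $g(\cdot)$ is continuous and strictly decreasing, and by Assumption~\ref{assumption:existence-of-optimal-lambda} there is a unique $\lambda^*\in(b,c)$ with $g(\lambda^*)=\bar{N}$. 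Thus $g(\lambda)-\bar{N}>0$ for $\lambda<\lambda^*$ and $<0$ for $\lambda>\lambda^*$, so $V(\lambda):=(\lambda-\lambda^*)^2$ is a strict Lyapunov function, $\lambda^*$ is the unique globally asymptotically stable equilibrium in $[b,c]$, and the interior position of $\lambda^*$ makes the reflection inactive near the limit. This gives $\lambda(t)\to\lambda^*$ almost surely.

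Finally, for the distributional claim on $B(t)$, I would use continuity: since $\lambda(t)\to\lambda^*$ almost surely and the Gibbs kernel $P_\lambda$ is a continuous function of $\lambda$, the nonhomogeneous chain $\{B(t)\}$ is asymptotically driven by $P_{\lambda^*}$; using the uniform contraction (Dobrushin coefficient bounded away from $1$ uniformly over $[b,c]$, analogous to the bound sketched for BASICGIBBS), one concludes that the marginal law of $B(t)$ converges in total variation to $\pi_{\beta|\lambda^*}(\cdot)$. The principal obstacle, and the step I would write most carefully, is the averaging in the slow recursion: one must show that the Markov noise $\|B(t-1)\|_1-g(\lambda(t))$ averages out despite $\lambda(t)$ drifting, which rests on the uniform ergodicity of $P_\lambda$ on $[b,c]$ and the continuity of $\lambda\mapsto\pi_{\beta|\lambda}$; the compactness of the projection interval and finiteness of $\mathcal{B}$ are essential to make this uniformity free.
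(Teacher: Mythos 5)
Your proposal is correct and follows essentially the same route as the paper: both treat the $\lambda$-recursion as a stochastic approximation with Markov noise driven by the quasi-stationary Gibbs chain, use the continuity and strict monotonicity of $\lambda\mapsto\mathbb{E}_{\pi_{\beta|\lambda}}\|B\|_1$ (Lemma~\ref{lemma:active-sensors-decreasing-in-lambda-under-basic-gibbs-sampling}) to identify $\lambda^*$ as the unique globally attracting zero, and then pass continuity of $\pi_{\beta|\lambda}$ through $\lambda(t)\to\lambda^*$ to get the limiting law of $B(t)$. The only cosmetic difference is that you phrase the slow-timescale limit via the projected ODE and an explicit Lyapunov function (Borkar Ch.~6), whereas the paper decomposes the noise into a bounded martingale difference plus a vanishing bias term and cites Borkar Ch.~2 and Ch.~5.4; your explicit attention to the uniform-in-$\lambda$ ergodicity needed for the averaging step is, if anything, more careful than the paper's one-line assertion of that fact.
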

\begin{proof}
See Appendix~\ref{appendix:proof-of-optimality-of-the-learning-algorithm-for-constrained-problem}.
\end{proof}
This theorem says that GIBBSLEARNING   produces a configuration from the distribution $\pi_{\beta| \lambda^*}(\cdot)$ under steady state.

\subsection{A hard constraint on the number of activated sensors}\label{subsection:hard-constraint}
Let us consider the following modified constrained problem:
\begin{equation}\label{eqn:constrained-optimization-problem-static-data-parametric-distribution}
\min_{B \in \mathcal{B}} \mathbb{E} d_B(\underline{X},\hat{\underline{X}}) \textbf{ s.t. } ||B||_1 \leq \bar{N} 
\tag{MCP}
\end{equation}
It is easy to see that \eqref{eqn:constrained-optimization-problem-static-data-parametric-distribution} can be easily solved using similar Gibbs sampling algorithms as in Section~\ref{section:gibbs-sampling-unconstrained-problem}, where the Gibbs sampling algorithm runs only on the set of configurations which activate $\bar{N}$ number of sensors. 
Thus, as a by-product, we have also proposed a methodology for the problem in \cite{wang-etal16efficient-observation-selection}, though our framework is more general than   \cite{wang-etal16efficient-observation-selection}. 

\begin{remark}
The constraint in \eqref{eqn:constrained-optimization-problem-static-data} is weaker than  \eqref{eqn:constrained-optimization-problem-static-data-parametric-distribution}. 
\end{remark}
\begin{remark}
If we choose $\beta$ very large, then the number of sensors activated by GIBBSLEARNING will have very small variance. This allows us to solve \eqref{eqn:constrained-optimization-problem-static-data-parametric-distribution} with high probability.
\end{remark}

 \begin{algorithm}[t!]
 \hrule
 Choose any initial estimate $\underline{\theta}_1$. Sample the sensor $j_1=\arg \min_{j \in \mathcal{N}}  \mathbb{E} \bigg( d_{B:B_j=1,||B||_1=1}(\underline{X},\hat{\underline{X}}) \bigg|  \underline{\theta}_1 \bigg)$. In general, after sampling nodes $j_1,j_2,\cdots,j_k$ and observing the partial data $\underline{X}_{j_1}=\underline{x}_{j_1}, \cdots, \underline{X}_{j_k}=\underline{x}_{j_k}$, obtain a new estimate $\underline{\theta}_{k+1}$ by {\em completely} running the EM algorithm using the available partial data and starting from the initial estimate  $\underline{\theta}_k$. 
 Once  $\underline{\theta}_{k+1}$ is obtained, sample 
\begin{eqnarray*}
j_{k+1}&=&\arg \min_{j \in \mathcal{N}, j \notin \{ j_1,\cdots,j_k \} }  \mathbb{E}_{B} \\
&& \bigg( d_B(\underline{X},\hat{\underline{X}}) \bigg| \underline{X}_{j_1}=\underline{x}_{j_1}, \cdots, \underline{X}_{j_k}=\underline{x}_{j_k} ;\underline{\theta}_{k+1} \bigg)
\end{eqnarray*} 
where $B$ is such that $B_j=B_{j_1}=\cdots=B_{j_k}=1$, and $||B||_1=k+1$. 
 Continue this process until the $\overline{N}$-th sensor is sampled.
 \hrule
 \caption{EMSTATIC algorithm}
 \label{algorithm:expectation-maximization-greedy-algorithm}
 \end{algorithm}

\section{Expectation maximization based algorithm for parameterized distribution of data}\label{section:EM-for-parametric-distribution}
In previous sections, we assumed that the joint distribution of $\underline{X}$ is completely known to the fusion center. In case this joint distribution is not known but a parametric form $p(\underline{x}|\underline{\theta})$ of the distribution is known with unknown parameter $\underline{\theta}$, selecting all active sensors at once might be highly suboptimal, and a better approach would be to sample sensor nodes sequentially  and refine the estimate of $\underline{\theta}$ using the data collected from a newly sampled sensor. We use standard expectation maximization (EM) algorithm (see \cite[Section~$5.2$]{hajek-lecture-note}) to refine the estimate of $\underline{\theta}$. 
Hence, we present a greedy algorithm~EMSTATIC (Algorithm~\ref{algorithm:expectation-maximization-greedy-algorithm}) to solve \eqref{eqn:constrained-optimization-problem-static-data-parametric-distribution}:

\begin{remark}
This algorithm is  based on heuristics, and it does not have any optimality guarantee because (i) EM algorithm yields a parameter value which corresponds to only a local maximum of the log-likelihood function of the observed data, and (ii) the greedy algorithm to pick the nodes is suboptimal. 

The performance of EMSTATIC~algorithm depends on the initial value 
$\underline{\theta}_1$, since  $\underline{\theta}_1$ will determine  $\{\underline{\theta}_k \}_{k=1,2,\cdots,\bar{N}}$ and the chosen subset of activated sensors.  If $\underline{\theta}_1$ happens to be initialized at a favourable value, then  EMSTATIC~algorithm might even yield the same optimal subset of sensors as in \ref{eqn:constrained-optimization-problem-static-data-parametric-distribution} with $\underline{\theta}$ known apriori. One trivial example for this case would be when $\overline{N}=1$ and we set $\underline{\theta}_1=\underline{\theta}$.
\end{remark}

In case $\underline{X}(t) \sim p(\underline{x}|\underline{\theta})$  varies in time~$t$ in an i.i.d. fashion,  
we can employ the EMSEQUENTIAL~algorithm (Algorithm~\ref{algorithm:expectation-maximization-greedy-algorithm-iid}) to find the optimal subset of sensors at each discrete time slot $t$.

 \begin{remark}
 The performance of EMSEQUENTIAL~algorithm depends on the initial estimate $\underline{\theta}_1$. {\em Also, the maximization operation $B(1)=\arg \min_{B(1) \in \mathcal{B}: ||B(1)||_1=\overline{N}}  \mathbb{E} \bigg( d_{B(1)}(\underline{X}(1),\hat{\underline{X}}(1)) \bigg|  \underline{\theta}_1 \bigg)$ can be efficiently done by employing Gibbs sampling algorithms as in Section~\ref{section:gibbs-sampling-unconstrained-problem}; this shows the potential use of Gibbs sampling in solving sensor subset selection problem for parameterized distribution of data with unknown parameters.} However, since this is not the main focus of our paper, we will only consider known data distribution from now on.
 \end{remark}

\section{Numerical Results}\label{section:numerical-results}

\subsection{Performance of BASICGIBBS~algorithm}
\label{subsection:numerical-performance-of-basic-gibbs-sampling}
For the sake of illustration, we consider $N=18$~sensors which are supposed to sense $\underline{X}=\{X_1,X_2,\cdots,X_{18}\}$, where 
$\underline{X}$ is a jointly Gaussian random vector with covariance matrix $M$. Sensor~$k$ has access only to $X_k$. The matrix $M$ is chosen as follows. We generated a random $N \times N$ matrix $A$ whose elements are uniformly and independently distributed over the interval $[-1,1]$, and set $M=A^T A$ as the covariance matrix of $\underline{X}$. We set sensor activation cost $\lambda=2.3$, and seek to solve \eqref{eqn:unconstrained-optimization-problem-static-data} with MMSE as the error metric. We assume that sensing at each node is perfect,\footnote{However, our analysis can be extended where there is sensing error, but the distribution of sensing error is known to the fusion center.} and 
that the fusion center estimates $\hat{\underline{X}}$ from the observation $\{X_i\}_{i \in S}=:\underline{X}_S$ as $\mathbb{E}(\underline{X} | \underline{X}_S)$, where $S$ is the set of active sensors. Under such an estimation scheme, the conditional distribution of 
$X_{S^c}$ is still a jointly Gaussian random vector with mean $\mathbb{E}(\underline{X}_{S^c} | \underline{X}_S)$ and the covariance matrix $M(S^c,S^c)-M(S^c,S) M(S,S)^{-1}M(S,S^c)$ (see \cite[Proposition~$3.4.4$]{hajek-lecture-note}), where $M(S,S^c)$ is the restriction of $M$ to the rows indexed by $S$ and the columns indexed by $S^c$. The trace of this covariance matrix gives the MMSE when the subset $S$ of sensors are active.

\begin{algorithm}[t!]
 \hrule
 Choose any initial estimate $\underline{\theta}_1$. In slot~$t=1$,  choose the configuration $B(1)$ of sensors  $B(1)=\arg \min_{B(1) \in \mathcal{B}: ||B(1)||_1=\overline{N}}  \mathbb{E} \bigg( d_{B(1)}(\underline{X}(1),\hat{\underline{X}}(1)) \bigg|  \underline{\theta}_1 \bigg)$. Then update the parameter to $\underline{\theta}_2$ using EM algorithm with the partial observation $\underline{X}_{B(1)}(1)=\underline{x}_{B(1)}(1)$ and with initial estimate $\underline{\theta}_1$.   Use $\underline{\theta}_2$ to choose $B(2)=\arg \min_{B(2) \in \mathcal{B}: ||B(2)||_1=\overline{N}}  \mathbb{E} \bigg( d_{B(2)}(\underline{X}(2),\hat{\underline{X}}(2)) \bigg| \underline{\theta}_2 \bigg)$ in slot $t=2$. Continue this procedure for all $t$.
 \hrule
 \caption{EMSEQUENTIAL algorithm}
 \label{algorithm:expectation-maximization-greedy-algorithm-iid}
 \vspace{2mm}
 \end{algorithm}

In this scenario, in Figure~\ref{fig:comparison--gibbs-optimal-greedy-finitegibbs}, we compare the cost for the following four algorithms:
\begin{itemize}
\item {\em OPTIMAL:} Here we consider the minimum possible cost for \eqref{eqn:unconstrained-optimization-problem-static-data}.
\item {\em BASICGIBBS under steady state:} Here we assume that the configuration $B \in \mathcal{B}$ is chosen 
according to the distribution $\pi_{\beta}(\cdot)$ defined in Section~\ref{section:gibbs-sampling-unconstrained-problem}. 
This is done for several values of $\beta$.
\item {\em BASICGIBBS  with finite iteration:} Here we run BASICGIBBS~algorithm for $100$~iterations. This is done independently for several values of $\beta$, where for each $\beta$ the iteration starts from an independent random configuration. Note that, we have simulated only one sample path of BASICGIBBS for each $\beta$; if the algorithm is run again independently, the results will be different.
\item {\em GREEDY:} Start with an empty set $S$, and find the cost if this subset of sensors are activated. Then compare this cost with the cost in case sensor~$1$ is added to this set. If it turns out that adding sensor~$1$ to this set $S$ reduces the cost, then add sensor~$1$ to the set $S$; otherwise, remove sensor~$1$ from set $S$. Do this operation serially for all sensors, and activate the  sensors given by the final set $S$.
\end{itemize}
It turns out that, under the optimal configuration, $12$ sensors are activated and the optimal cost is $32.3647$. On the other hand, GREEDY activates $14$~sensors and incurred a cost of $35.9663$. However, we are not aware of any monotonicity or supermodularity property of the objective function in \eqref{eqn:unconstrained-optimization-problem-static-data}; hence, we cannot provide any constant approximation ratio guarantee for the problem~\eqref{eqn:unconstrained-optimization-problem-static-data}. On the other hand, we have already proved that BASICGIBBS performs near optimally for large $\beta$. Hence, we choose to investigate the performance of BASICGIBBS, though it might require more number of iterations compared to $N=18$ iterations for GREEDY. It is important to note that, \eqref{eqn:unconstrained-optimization-problem-static-data} is NP-hard, and BASICGIBBS  allows us to avoid searching over $2^N$ possible configurations.

In Figure~\ref{fig:comparison--gibbs-optimal-greedy-finitegibbs}, we can see that for $\beta \geq 2$, the steady state distribution $\pi_{\beta}(\cdot)$ of BASICGIBBS achieves better expected cost than GREEDY, and the cost becomes closer to the optimal cost as $\beta$ increases. On the other hand, for each $\beta \geq 2$, BASICGIBBS after $100$~iterations yielded a configuration   that achieves near-optimal cost. Hence, BASICGIBBS with reasonably  small number of iterations can be used to find the optimal subset of active sensors when $N$ is large.

 \begin{figure}[!t]
\begin{center}
\includegraphics[height=7cm, width=9cm]{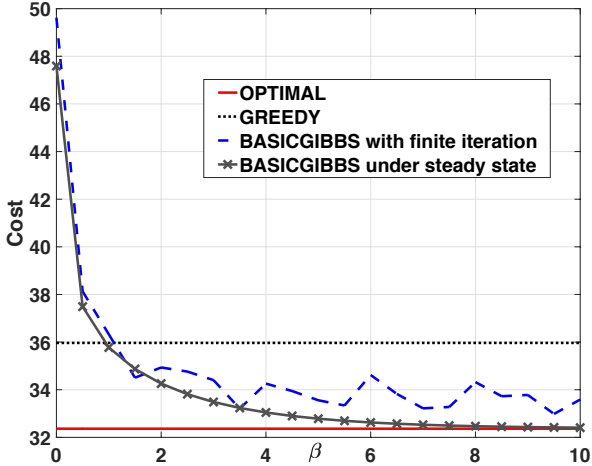}
\end{center}
\caption{Comparison among OPTIMAL, BASICGIBBS under steady state, GREEDY, and BASICGIBBS with finite iterations, for solving problem~\eqref{eqn:unconstrained-optimization-problem-static-data}.   For each $\beta$, BASICGIBBS with finite iterations stops after $100$~iterations. Details are provided in Section~\ref{subsection:numerical-performance-of-basic-gibbs-sampling}.}
\label{fig:comparison--gibbs-optimal-greedy-finitegibbs}
\end{figure}

\subsection{Performance of Gibbs sampling applied to problem~\eqref{eqn:constrained-optimization-problem-static-data-parametric-distribution}}\label{subsection:numerical-gibbs-sampling-applied-to-hard-constrained-problem}
Here we seek to solve problem~\eqref{eqn:constrained-optimization-problem-static-data-parametric-distribution} with $\bar{N}=10$ under the same setting as in Section~\ref{subsection:numerical-performance-of-basic-gibbs-sampling} except that a new sample of  the covariance matrix $M$ is chosen. Here we compare the estimation error for the following three cases:
\begin{itemize}
\item {\em OPTIMAL:} Here we choose an optimal subset  for \eqref{eqn:constrained-optimization-problem-static-data-parametric-distribution}.
\item {\em BASICGIBBS  under steady state:} Here we assume that the configuration $B$ is chosen 
according to the steady-state distribution $\pi_{\beta}(\cdot)$ defined in Section~\ref{section:gibbs-sampling-unconstrained-problem}, but restricted only to the set $\{B \in \mathcal{B}: ||B||_1=\bar{N}\}$. This is done by putting $h(B)=\mathbb{E}d_B(\underline{X},\hat{\underline{X}})$ if $||B||_1=\bar{N}$ and $h(B)=\infty$ otherwise.
This is done for several values of $\beta$.
\item {\em NEWGREEDY:} Start with an empty set $S$, and find the estimation error if this subset of sensors are activated. Then find the sensor~$j_1$ which, when added to $S$, will result in the minimum estimation error. If the estimation error for $S \cup \{j_1\}$ is less than that of $S$, then do $S=S \cup \{j_1\}$.  Now find the sensor~$j_2$ which, when added to $S$, will result in the minimum estimation error. If the estimation error for $S \cup \{j_2\}$ is less  than that of $S$, then do $S=S \cup \{j_2\}$.  Repeat this operation until we have $|S|=\bar{N}$, and activate the set of $\bar{N}$ sensors given by the final set $S$. 
{\em A similar greedy algorithm is used in \cite{wang-etal16efficient-observation-selection}}.
\end{itemize}
The performances for these three cases are shown in Figure~\ref{fig:gibbs_optimal_greedy_comparison_fixed_number_of_active_sensors}. It turns out that, the estimation error for  OPTIMAL and NEWGREEDY are $12.9741$ and $15.4343$, respectively. BASICGIBBS outperforms NEWGREEDY for $\beta \geq 2$, and becomes very close to OPTIMAL performance for $\beta \geq 5$.

\begin{figure}[!t]
\begin{center}
\includegraphics[height=7cm, width=9cm]{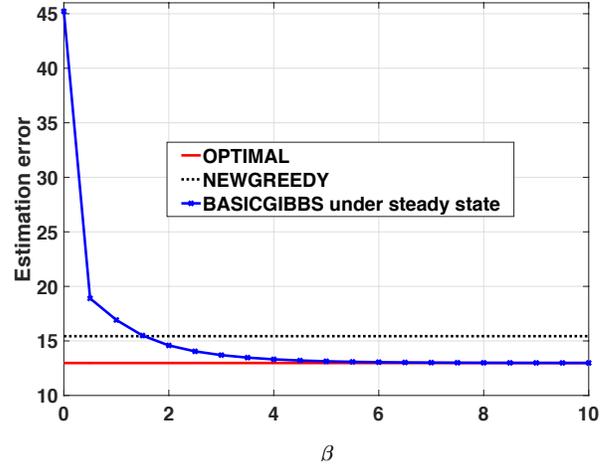}
\end{center}
\vspace{-6mm}
\caption{Comparison among OPTIMAL, BASICGIBBS under steady state, and NEWGREEDY, for solving problem~\eqref{eqn:constrained-optimization-problem-static-data-parametric-distribution}.   Details are provided in  Section~\ref{subsection:numerical-gibbs-sampling-applied-to-hard-constrained-problem}.}
\label{fig:gibbs_optimal_greedy_comparison_fixed_number_of_active_sensors}
\end{figure}

\begin{figure}[!t]
\begin{center}
\includegraphics[height=7cm, width=8cm]{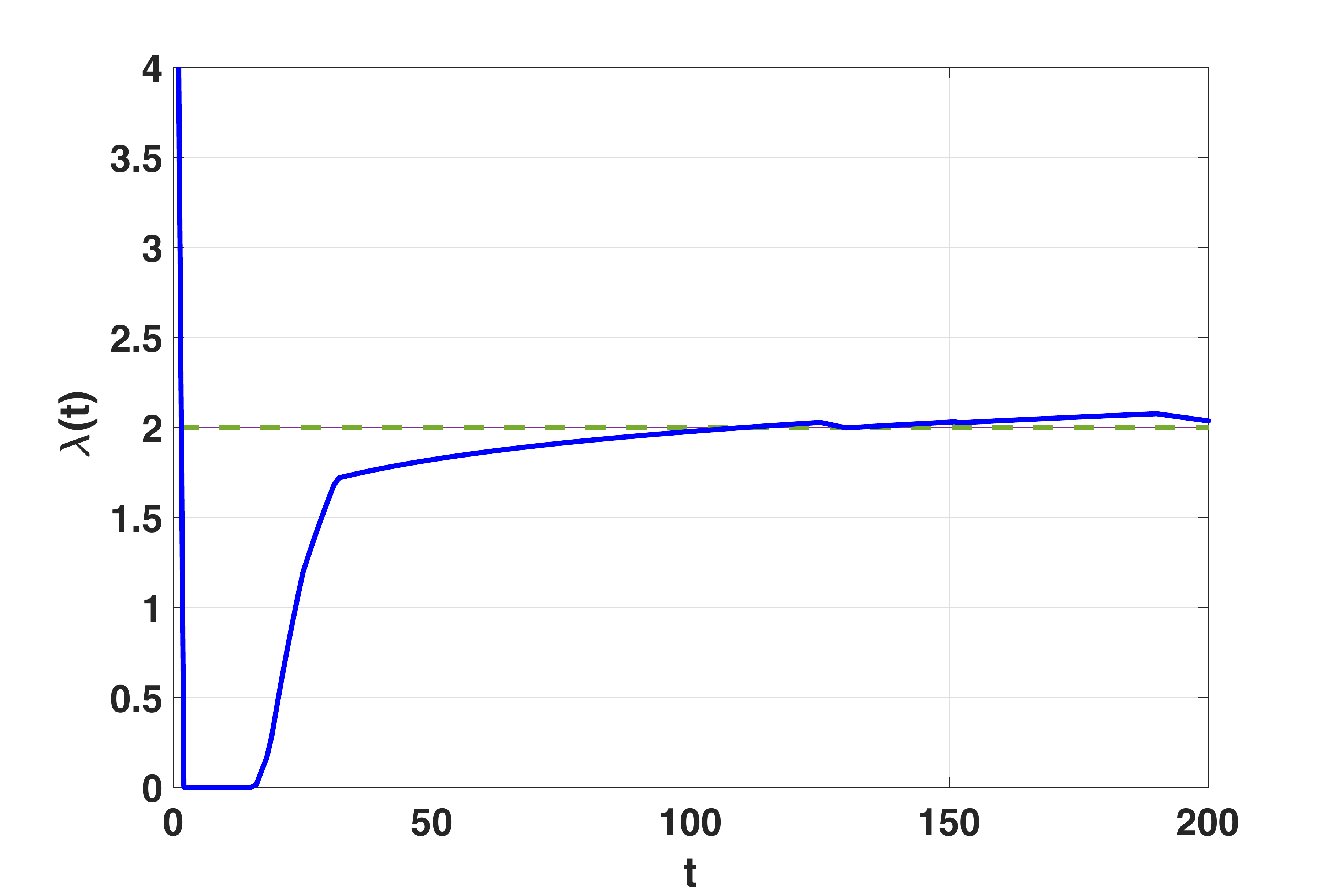}
\vspace{-6mm}
\includegraphics[height=7cm, width=8cm]{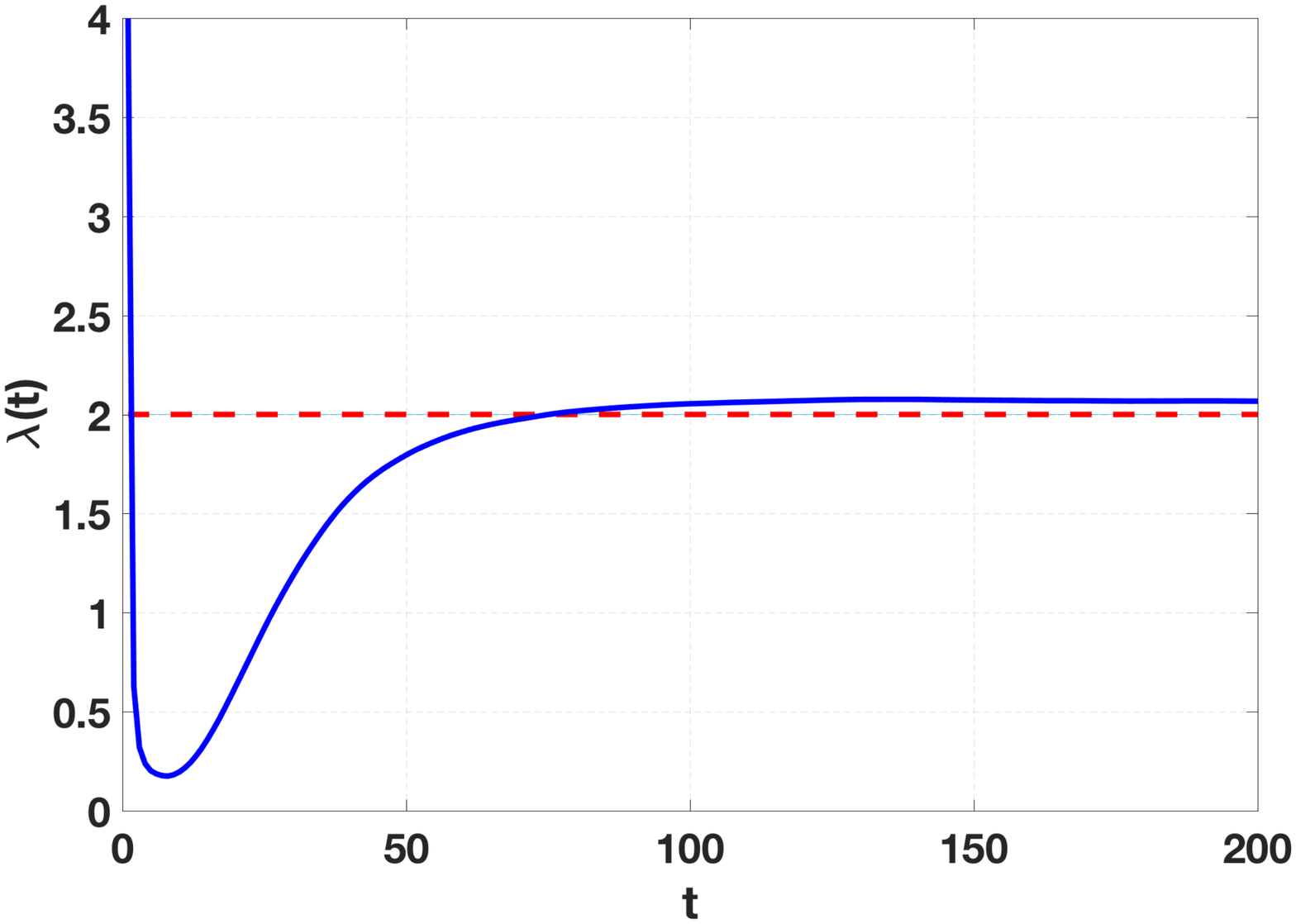}
\end{center}
\vspace{-4mm}
\caption{Illustration for convergence speed of $\lambda(t)$ in the GIBBSLEARNING~algorithm. 
{\bf Top plot:} Result for a single sample path. Details can be found in   Section~\ref{subsection:numerical-convergence-speed-gibbs-sampling-stochastic-approximation}. 
{\bf Bottom plot:} Average result over $1000$ independent sample paths. Details can be found in   Section~\ref{subsection:numerical-convergence-speed-gibbs-sampling-stochastic-approximation}.}
\label{fig:gibbs-stochastic-approximation}
\end{figure}

\subsection{Convergence speed of GIBBSLEARNING~algorithm}
\label{subsection:numerical-convergence-speed-gibbs-sampling-stochastic-approximation}
We first demonstrate the convergence speed of GIBBSLEARNING~algorithm, {\em for one specific sample path}.

We consider a setting similar to that of Section~\ref{subsection:numerical-performance-of-basic-gibbs-sampling}, except that we  fix $\beta=5$. The covariance matrix $M$ is generated using the same method, but the realization of $M$ here is different from that in Section~\ref{subsection:numerical-performance-of-basic-gibbs-sampling}. Under this setting, for $\lambda^*=2$, BASICGIBBS~algorithm yields the MMSE $3.5680$, and the expected number of sensors activated by BASICGIBBS~algorithm becomes $12.7758$. Now, let us consider problem~\eqref{eqn:constrained-optimization-problem-static-data} with the constraint value $\bar{N}=12.7758$. Clearly, if GIBBSLEARNING~algorithm is employed to find out the solution of problem~\eqref{eqn:constrained-optimization-problem-static-data} with $\bar{N}=12.7758$, then $\lambda(t)$ should converge to $\lambda^*=2$. 

The evolution of $\lambda(t)$ against the iteration index $t$ is shown in the top plot in Figure~\ref{fig:gibbs-stochastic-approximation}. We can see that, starting from $\lambda(0)=4$ and and using the stepsize sequence $a(t)=\frac{1}{t}$, the iterate  $\lambda(t)$ becomes very close to $\lambda^*=2$ within $100$~iterations.  At $t=200$, we found that $\lambda(200)=2.0318$. The resulting configuration yielded by GIBBSLEARNING~algorithm at $t=200$ achieves MMSE $5.0308$ and activates $12$~sensors; it is important to remember that these results are for one specific realization of the sample path. On the other hand, under $\lambda(200)=2.0318$, the steady state distribution of BASICGIBBS, $\pi_{\beta}(\cdot)$, yields MMSE $3.6524$ and mean number of active sensors $12.7354$, which are very close to the respective target values $3.5680$ and $\bar{N}=12.7758$. 

However, the top plot in Figure~\ref{fig:gibbs-stochastic-approximation} is only for a specific sample path of GIBBSLEARNING~algorithm. In the bottom plot in Figure~\ref{fig:gibbs-stochastic-approximation}, we demonstrate convergence speed of $\lambda(t)$ averaged over multiple independent sample paths of GIBBSLEARNING~algorithm. Here we generate a {\em different covariance matrix} $M$, set $\lambda^*=2$, and follow the same procedure as before to set $\bar{N}$. Then we run GIBBSLEARNING~algorithm independently $1000$~times, starting from  $\lambda(0)=4$. The bottom plot of Figure~\ref{fig:gibbs-stochastic-approximation} shows the variation of $\lambda(t)$ (averaged over $1000$ sample paths) with $t$. We can again see that the average $\lambda(t)$ is very close to $\lambda^*=2$ for $t \geq 100$.

Thus, our numerical illustration shows that GIBBSLEARNING~algorithm has reasonably fast convergence rate  for practical active sensing.

\section{Conclusion}\label{section:conclusion}
In this paper, we have presented Gibbs sampling, stochastic approximation and expectation maximization based algorithms for efficient data estimation in the context of active sensing. We first proposed Gibbs sampling based algorithms for unconstrained optimization of the estimation error and the mean number of active sensors, proved convergence of these algorithms, and provided a bound on the convergence speed. Next, we proposed an algorithm based on Gibbs sampling and stochastic approximation, in order to solve a constrained version of the above unconstrained problem, and proved its convergence. Finally, we proposed expectation maximization based algorithms for the scenario where the sensor data is coming from a distribution with known parametric distribution but unknown parameter value.   Numerical results demonstrate the near-optimal performance of some of these algorithms with small number of computations. 

As our future research endeavours, we seek to develop distributed sensor subset selection algorithms to efficiently track the data varying in time according to a stochastic process.

{\small
\bibliographystyle{unsrt}
\bibliography{arpan-techreport}

\begin{thebibliography}{10}

\bibitem{schnitzler-etal15sensor-selection-crowdsensing}
F.~Schnitzler, J.Y. Yu, and S.~Mannor.
\newblock Sensor selection for crowdsensing dynamical systems.
\newblock In {\em International Conference on Artificial Intelligence and
  Statistics (AISTATS)}, pages 829--837, 2015.

\bibitem{daphney-etal14active-classification-pomdp}
D.S. Zois, M.~Levorato, and U.~Mitra.
\newblock Active classification for pomdps: A kalman-like state estimator.
\newblock {\em IEEE Transactions on Signal Processing}, 62(23):6209--6224,
  2014.

\bibitem{daphney-etal13energy-efficient-sensor-selection}
D.S. Zois, M.~Levorato, and U.~Mitra.
\newblock Energy-efficient, heterogeneous sensor selection for physical
  activity detection in wireless body area networks.
\newblock {\em IEEE Transactions on Signal Processing}, 61(7):1581--1594, 2013.

\bibitem{krishnamurthy07structured-threshold-policies}
V.~Krishnamurthy and D.V. Djonin.
\newblock Structured threshold policies for dynamic sensor scheduling—a
  partially observed markov decision process approach.
\newblock {\em IEEE Transactions on Signal Processing}, 55(10):4938--4957,
  2007.

\bibitem{wu-arapostathis08optimal-sensor-querying}
W.~Wu and A.~Arapostathis.
\newblock Optimal sensor querying: General markovian and lqg models with
  controlled observations.
\newblock {\em IEEE Transactions on Automatic Control}, 53(6):1392--1405, 2008.

\bibitem{gupta-etal06stochastic-sensor-selection-algorithm}
V.~Gupta, T.H. Chung, B.~Hassibi, and R.M. Murray.
\newblock On a stochastic sensor selection algorithm with applications in
  sensor scheduling and sensor coverage.
\newblock {\em Automatica}, 42:251--260, 2006.

\bibitem{bertrand-moonen10sensor-selection-linear-mmse}
A.~Bertrand and M.~Moonen.
\newblock Efficient sensor subset selection and link failure response for
  linear mmse signal estimation in wireless sensor networks.
\newblock In {\em European Signal Processing Conference (EUSIPCO)}, pages
  1092--1096. EURASIP, 2010.

\bibitem{wang-etal16efficient-observation-selection}
D.~Wang, J.~Fisher III, and Q.~Liu.
\newblock Efficient observation selection in probabilistic graphical models
  using bayesian lower bounds.
\newblock In {\em Proceedings of the Thirty-Second Conference on Uncertainty in
  Artificial Intelligence (UAI)}, pages 755--764. ACM, 2016.

\bibitem{chattopadhyay-etal16gibbsian-caching-arxiv}
A.~Chattopadhyay and B.~Błaszczyszyn.
\newblock Gibbsian on-line distributed content caching strategy for cellular
  networks.
\newblock {\em \url{https://arxiv.org/abs/1610.02318}}, 2016.

\bibitem{breamud99gibbs-sampling}
P.~Bremaud.
\newblock {\em Markov Chains, Gibbs Fields, Monte Carlo Simulation, and
  Queues}.
\newblock Springer, 1999.

\bibitem{borkar08stochastic-approximation-book}
Vivek~S. Borkar.
\newblock {\em Stochastic approximation: a dynamical systems viewpoint.}
\newblock Cambridge University Press, 2008.

\bibitem{hajek-lecture-note}
B.~Hajek.
\newblock {\em An Exploration of Random Processes for Engineers}.
\newblock Lecture Notes for ECE 534, 2011.

\end{thebibliography}
}

\newpage

\appendices

\section{Proof of Theorem~\ref{theorem:relation-between-constrained-and-unconstrained-problems}}
\label{appendix:proof-of-relation-between-constrained-and-unconstrained-problems}
We will prove only the first part of the theorem where there exists a unique $B^*$. The second part of the theorem can be proved similarly. 

Let us denote the optimizer for \eqref{eqn:constrained-optimization-problem-static-data} by $B$, which is possibly different from $B^*$. Then, by the definition of $B^*$, we have $\mathbb{E} d_{B*}(\underline{X},\hat{\underline{X}})+ \lambda^* ||B^*||_1 \leq \mathbb{E} d_B (\underline{X},\hat{\underline{X}}) + \lambda^* ||B||_1$. But $||B||_1 \leq K$ and $||B^*||_1=K$. Hence, $\mathbb{E} d_{B*}(\underline{X},\hat{\underline{X}}) \leq \mathbb{E} d_B (\underline{X},\hat{\underline{X}})$. This completes the proof.

\section{Weak and Strong Ergodicity}
\label{appendix:weak-and-strong-ergodicity}
Consider a discrete-time  Markov chain (possibly not time-homogeneous) $\{B(t)\}_{t \geq 0}$ with transition probability matrix (t.p.m.)  $P(m;n)$ between 
$t=m$ and $t=n$. We denote by $\mathcal{D}$   the collection of all possible probasbility distributions 
 on the state space. Let $d_V(\cdot,\cdot)$ denote  the total variation distance between two distributions in $\mathcal{D}$. 
Then $\{B(t)\}_{t \geq 0}$ is called weakly ergodic if, for all $m \geq 0$,  we have 
$\lim_{n \uparrow \infty} \sup_{\mu,\nu \in \mathcal{D}}  d_V (\mu P(m;n) , \nu P(m;n) ) =0 $.

The Markov chain $\{B(t)\}_{t \geq 0}$ is called strongly ergodic if there exists $\pi \in \mathcal{D}$ such that, 
$\lim_{n \uparrow \infty} \sup_{\mu \in \mathcal{D}}  d_V (\mu^{T} P(m;n) , \pi ) =0 $ for all $m \geq 0$.

\section{Proof of Theorem~\ref{theorem:result-on-weak-and-strong-ergodicity}}
\label{appendix:proof-of-weak-and-strong-ergodicity}
 
We will first show that the Markov chain $\{B(t)\}_{t \geq 0}$ in weakly ergodic.

Let us define $\Delta:=\max_{B \in \mathcal{B}, A \in \mathcal{B}}|h(B)-h(A)|$.

Consider the transition probability matrix (t.p.m.) $P_l$ for the inhomogeneous Markov 
chain $\{X(l)\}_{l \geq 0}$ (where $X(l):=B(lN)$). The Dobrushin's ergodic coefficient $\delta(P_l)$ is given by 
(see \cite[Chapter~$6$, Section~$7$]{breamud99gibbs-sampling} for definition) 
$\delta(P_l)=1- \inf_{B^{'},B^{''} \in \mathcal{B}} \sum_{B \in \mathcal{B}} \min \{P_l(B^{'},B),P_l(B^{''},B) \}$. 
A sufficient condition for the Markov chain $\{B(t)\}_{t \geq 0}$  
to be weakly ergodic is $\sum_{l=1}^{\infty}(1-\delta(P_l))=\infty$ (by 
\cite[Chapter~$6$, Theorem~$8.2$]{breamud99gibbs-sampling}).

Now, with positive probability, activation states for all nodes are updated over a period of $N$ slots. Hence, $P_l(B^{'},B)>0$ for all $B^{'},B \in \mathcal{B}$. Also, once a node $j_t$ for $t=lN+k$ 
is chosen  in MODIFIEDGIBBS~algorithm, the sampling probability for any activation state in a slot 
is greater than  $\frac{e^{-\beta(lN+k) \Delta}}{2}$. 
Hence, for independent sampling over $N$ slots, we  have, for all pairs $B^{'},B$: 
$$P_l(B^{'},B) >  \prod_{k=0}^{N-1}\bigg( \frac{e^{-\beta(lN+k) \Delta}}{2N} \bigg) >0$$ 
Hence, 
\begin{eqnarray}
&&\sum_{l=0}^{\infty}(1-\delta(P_l)) \nonumber\\
&=& \sum_{l=0}^{\infty} \inf_{B^{'},B^{''} \in \mathcal{B}} \sum_{B \in \mathcal{B}} \min \{P_l(B^{'},B),P_l(B^{''},B) \} \nonumber\\
& \geq & \sum_{l=0}^{\infty} 2^N \prod_{k=0}^{N-1} \bigg( \frac{e^{-\beta(0) \log(1+lN+k) \times \Delta}}{2N} \bigg) \nonumber\\
& \geq &   \sum_{l=0}^{\infty}  \prod_{k=0}^{N-1} \bigg( \frac{e^{-\beta(0) \log(1+lN+N) \times \Delta}}{N} \bigg)  \nonumber\\
& = &  \frac{1}{N^N} \sum_{l=1}^{\infty}  \frac{1}{  (1+lN)^{\beta(0) N \Delta}}  \nonumber\\
& \geq &  \frac{1}{ N^{N+1}} \sum_{i=N+1}^{\infty}  \frac{1}{  (1+i)^{\beta(0) N \Delta}}  \nonumber\\
& = & \infty
 \end{eqnarray}
Here the first inequality uses the fact that the cardinality of $\mathcal{B}$ is $2^N$. The second inequality follows from  replacing $k$ by $N$ in the numerator. The third inequality follows from lower-bounding  
$\frac{1}{(1+lN)^{\beta(0) N \Delta}}$ by $\frac{1}{N}\sum_{i=lN}^{lN+N-1}  \frac{1}{  (1+i)^{\beta(0) N \Delta}} $. 
The last equality follows from the fact that $\sum_{i=1}^{\infty} \frac{1}{i^a}$ diverges for $0 <a<1$.

 Hence, the Markov chain  $\{B(t)\}_{t \geq 0}$ is  weakly ergodic.
 
 In order to prove strong ergodicity of 
 $\{B(t)\}_{t \geq 0}$, we invoke \cite[Chapter~$6$, Theorem~$8.3$]{breamud99gibbs-sampling}. 
 We denote the  t.p.m. of $\{B(t)\}_{t \geq 0}$ at a specific time $t=T_0$ 
 by $Q^{(T_0)}$, which is a given specific matrix. If  $\{B(t)\}_{t \geq 0}$   evolves up to infinite time 
 with {\em fixed} t.p.m. $Q^{(T_0)}$, then it will reach the stationary distribution $\pi_{\beta_{T_0}}(B)= \frac{e^{-\beta_{T_0} h(B)}}{Z_{\beta_{T_0}}}$. 
 Hence, we can claim that Condition~$8.9$ of \cite[Chapter~$6$, Theorem~$8.3$]{breamud99gibbs-sampling} is  satisfied. 
 
 Next, we check Condition~$8.10$ of \cite[Chapter~$6$, Theorem~$8.3$]{breamud99gibbs-sampling}. 
 For any $B \in \arg \min_{B^{'} \in \mathcal{B}} h(B^{'})$,  we can argue that $\pi_{\beta_{T_0}}(B)$ increases with $T_0$ for 
 sufficiently large $T_0$; this can be verified by considering the derivative of $\pi_{\beta}(B)$ w.r.t. $\beta$. For $B \notin \arg \min_{B^{'} \in \mathcal{B}} h(B^{'})$,  the probability 
 $\pi_{\beta_{T_0}}(B)$ decreases with $T_0$ for large $T_0$. Now, using the fact that any monotone, bounded sequence converges, we can write $\sum_{T_0=0}^{\infty} \sum_{B \in \mathcal{B}} |\pi_{\beta_{T_0+1}}(B)-\pi_{\beta_{T_0}}(B)| < \infty$. 
 
 Hence, by \cite[Chapter~$6$, Theorem~$8.3$]{breamud99gibbs-sampling}, the Markov chain $\{B(t)\}_{t \geq 0}$ is strongly ergodic.  It  is straightforward to verify the claim regarding the  limiting distribution.

\section{Proof of Lemma~\ref{lemma:active-sensors-decreasing-in-lambda}}
\label{appendix:proof-of-active-sensors-decreasing-in-lambda}
Let $\lambda_1 > \lambda_2 > 0$, and the corresponding optimal error and mean number of active sensors under these multiplier values be $(d_1,n_1)$ and $(d_2,n_2)$, respectively. Then, by definition, $d_1 + \lambda_1 n_1 \leq d_2 + \lambda_1 n_2$ and $d_2 + \lambda_2 n_2 \leq d_1 + \lambda_2 n_1$. Adding these two inequalities, we obtain $\lambda_1 n_1+\lambda_2 n_2 \leq \lambda_1 n_2+\lambda_2 n_1$, i.e., $(\lambda_1-\lambda_2)n_1 \leq (\lambda_1-\lambda_2)n_2$. Since $\lambda_1>\lambda_2$, we obtain $n_1 \leq n_2$. This completes the first part of the proof. The second part of the proof follows using similar arguments.

\section{Proof of Lemma~\ref{lemma:active-sensors-decreasing-in-lambda-under-basic-gibbs-sampling}}
\label{appendix:proof-of-active-sensors-decreasing-in-lambda-under-basic-gibbs-sampling}
Let us denote $\mathbb{E}||B||_1=:f(\lambda)=\frac{\sum_{B \in \mathcal{B}} ||B||_1 e^{-\beta h(B)}}{Z_{\beta}}$. 
It is straightforward to see that $\mathbb{E}||B||_1$ is continuously differentiable in $\lambda$. 
 
Let us denote $Z_{\beta}$ by $Z$ for simplicity, and let $h(B)=d_B+\lambda ||B||_1$ be the linear combination of the error (here we have written $\mathbb{E}d_B(\cdot,\cdot)$ as $d_B$ for simplicity in notation) and number of active sensors under configuration $B$. 
Then the derivative of $f(\lambda)$ w.r.t. $\lambda$ is given by:
\small
\begin{eqnarray*}
&& f'(\lambda)\\
&=& \frac{  -Z \beta \sum_{B \in \mathcal{B}} ||B||_1^2 e^{-\beta(d_B+\lambda ||B||_1)} - \sum_{B \in \mathcal{B}} ||B||_1 e^{-\beta(d_B+\lambda ||B||_1)}  \frac{dZ}{d \lambda}        }{Z^2}
\end{eqnarray*}
\normalsize

Now, it is straightforward to verify that $\frac{dZ}{d \lambda}=-\beta Z f(\lambda)$. Hence, 
\tiny
\begin{eqnarray*}
&& f'(\lambda)\\
&=& \frac{  -Z \beta \sum_{B \in \mathcal{B}} ||B||_1^2 e^{-\beta(d_B+\lambda ||B||_1)} + \sum_{B \in \mathcal{B}} ||B||_1 e^{-\beta(d_B+\lambda ||B||_1)} \beta Z f(\lambda)       }{Z^2}
\end{eqnarray*}
\normalsize

Now, $f'(\lambda) \leq 0$ is equivalent to 
$$f(\lambda) \leq \frac{  \sum_{B \in \mathcal{B}} ||B||_1^2 e^{-\beta(d_B+\lambda ||B||_1)}    }{    \sum_{B \in \mathcal{B}} ||B||_1 e^{-\beta(d_B+\lambda ||B||_1)}    }$$
Noting that $\mathbb{E}||B||_1=:f(\lambda)$ and dividing the numerator and denominator of R.H.S. by $Z$, 
the condition is reduced to $\mathbb{E}||B||_1 \leq \frac{\mathbb{E}||B||_1^2}{\mathbb{E}||B||_1}$, which is true since 
$\mathbb{E}||B||_1^2 \geq (\mathbb{E}||B||_1)^2$. Hence, $\mathbb{E}||B||_1$ is decreasing in $\lambda$ for any $\beta>0$.

\section{Proof of Theorem~\ref{theorem:optimality-of-the-learning-algorithm-for-constrained-problem}}
\label{appendix:proof-of-optimality-of-the-learning-algorithm-for-constrained-problem}
Let the distribution of $B(t)$ under GIBBSLEARNING~algorithm be 
$\mu_t(\cdot)$. Since $\lim_{t \rightarrow \infty} a(t)=0$, it follows that $\lim_{t \rightarrow \infty} d_V(\mu_{t-1}, \pi_{\beta| \lambda(t-1)})=0$ (where $d_V(\cdot,\cdot)$ is the total variation distance), and $\lim_{t \rightarrow \infty} ( \mathbb{E}_{\mu_{t-1}}||B||_1-\mathbb{E}_{\pi_{\beta} | \lambda(t-1) }||B||_1 ) :=\lim_{t \rightarrow \infty} e(t)=0$. Now, we can rewrite the $\lambda(t)$ update equation as follows:
\small
\begin{eqnarray}
 \lambda(t+1)=[  \lambda(t)+a(t) (\mathbb{E}_{\pi_{\beta}|\lambda(t-1)}||B||_1-\bar{N}+M_t+e_t)  ]_b^c
\end{eqnarray} 
\normalsize

Here $M_t:=||B(t-1)||_1 - \mathbb{E}_{\mu_{t-1}} ||B(t-1)||_1$ is a Martingale difference noise sequence, and 
$\lim_{t \rightarrow \infty} e_t=0$. It is easy to see that the derivative of $\mathbb{E}_{\pi_{\beta} | \lambda }||B||_1$ w.r.t. $\lambda$ is bouned for $\lambda \in [b,c]$; hence, $\mathbb{E}_{\pi_{\beta} | \lambda }||B||_1$ is a Lipschitz continuous function of $\lambda$. It is also easy to see that the sequence $\{M_t\}_{t \geq 0}$ is bounded. Hence, by the theory presented in 
\cite[Chapter~$2$]{borkar08stochastic-approximation-book} and 
\cite[Chapter~$5$, Section~$5.4$]{borkar08stochastic-approximation-book}, $\lambda(t)$ converges to the unique zero of $\mathbb{E}_{\pi_{\beta} | \lambda }||B||_1-\bar{N}$ almost surely. Hence,  $\lambda(t) \rightarrow \lambda^*$ almost surely. Since $\lim_{t \rightarrow \infty} d_V(\mu_{t-1}, \pi_{\beta| \lambda(t-1)})=0$ and $\pi_{\beta | \lambda}$ is continuous in $\lambda$, the limiting distribution of $B(t)$ becomes $\pi_{\beta | \lambda^*}$.

\end{document}